\setlist{midsep}
\definecolor{shadecolor}{gray}{0.90}
\DeclareMathOperator{\rank}{rank}
\newcommand{\YD}{\overline{Y}}
\renewcommand{\P}[1]{\mathbb{P}\left[{#1}\right]}
\newcommand{\E}[1]{\mathbb{E}\left[{#1}\right]}
\newcommand\Ps@textstyle[2]{\mathbb{P}_{#1}\left[{#2}\right]}
\newcommand\Es@textstyle[2]{\mathbb{E}_{#1}\left[{#2}\right]}
\newcommand\Ps[2]{%
  \mathchoice 
  {\underset{{#1}}{\mathbb{P}}\left[{#2}\right]
  }{\Ps@textstyle{#1}{#2}}{\Ps@textstyle{#1}{#2}}{\Ps@textstyle{#1}{#2}}
}
\newcommand\Es[2]{%
  \mathchoice 
  {\underset{{#1}}{\mathbb{E}}\left[{#2}\right]
  }{\Es@textstyle{#1}{#2}}{\Es@textstyle{#1}{#2}}{\Es@textstyle{#1}{#2}}
}
\newcommand{\D}{\mathcal{D}}
\renewcommand{\H}{\mathcal{H}}
\newtheorem{theorem}{Theorem}[section]
\newtheorem*{theorem*}{Theorem}
\newaliascnt{definition}{theorem}
\newtheorem{definition}[definition]{Definition}
\newtheorem*{definition*}{Definition}
\newaliascnt{lemma}{theorem}
\newtheorem{lemma}[lemma]{Lemma}
\newtheorem*{lemma*}{Lemma}
\newaliascnt{claim}{theorem}
\newtheorem{claim}[claim]{Claim}
\newtheorem*{claim*}{Claim}
\newaliascnt{fact}{theorem}
\newtheorem*{fact*}{Fact}
\newaliascnt{observation}{theorem}
\newtheorem*{observation*}{Observation}
\newaliascnt{conjecture}{theorem}
\newtheorem*{conjecture*}{Conjecture}
\newaliascnt{corollary}{theorem}
\newtheorem*{corollary*}{Corollary}
\newaliascnt{remark}{theorem}
\newtheorem*{remark*}{Remark}
\newaliascnt{proposition}{theorem}
\newtheorem*{proposition*}{Proposition}
\newaliascnt{example}{theorem}
\newtheorem*{example*}{Example}
\newcommand{\notshow}[1]{}
\newcommand{\ALC@uniqueautorefname}{Line}
\tikzset{cross/.style={cross out, draw=black, minimum size=2*(#1-\pgflinewidth), inner sep=0pt, outer sep=0pt},
cross/.default={1pt}}
\title{The Short-Side Advantage in Random Matching Markets}
\author{
  Linda Cai \\
  Princeton University \\
  tcai@princeton.edu
  \and
  Clayton Thomas \\
  Princeton University \\
  claytont@princeton.edu
}
\begin{document}

  \maketitle

\begin{abstract}
  A breakthrough of Ashlagi, Kanoria, and
  Leshno~\cite{AshlagiUnbalancedCompetition17} found that \emph{imbalance}
  in the number of agents on either side of a random
  matching market has a profound effect on the market's expected
  characteristics.
  Specifically, across all stable matchings,
  the ``long side'' (i.e. the side with a greater number of agents)
  receives significantly worse matches in expectation than the short side.
  Intuitively, this occurs because an agent on the long side is
  \emph{essentially unneeded} to create a stable matching --
  a matching could form almost as easily without them.
  Thus, an agent on the long side has very little market power,
  and must settle for a match which is not much better than a random
  assignment.

  We provide a new and simpler proof for a result
  of~\cite{AshlagiUnbalancedCompetition17} which formalizes this intuition.
\end{abstract}

%
%


\section{Introduction}

Stable matching mechanisms are ubiquitous in theory and in practice,
with centralized mechanisms used to match students to schools or medical
residents to hospitals.
The stability requirement says that no pair of unmatched agents would want
to break the match with their assigned partner and pair with each other
instead.
A vast and classical literature investigates the properties of the set
of stable matchings.
A statistic of primary interest is the \emph{average rank} that agents
receive (i.e. the index each agent has for their partner on their
preference list, where lower is better).
Classical results~\cite{WilsonAnalysisStable72, PittelAverageStable89}
show that in markets with $n$ uniform agents on each side,
the set of stable matchings is large, and average ranks can vary
widely across the set of stable matching.
In the two extreme points of the set (the ``one side optimal'' stable
matchings), one side gets their $O(\log n)$th most preferred partner, and
the other side gets their $\Omega(n / \log n)$th most preferred.


A striking property of these markets is that they are highly dependent on
the \emph{balance} of the number of agents on each side.
Specifically, as shown in~\cite{AshlagiUnbalancedCompetition17},
when there are $n$ agents on the ``short side'' and $n+1$
agents on the ``long side'',
the short side gets their $O(\log n)$th most preferred partner,
and the long side gets their $\Omega(n / \log n)$th most preferred partner,
and this holds in \emph{every} stable matching.

In this paper, we provide a new and intuitive proof for a simplified version
of this foundational result.

\paragraph{Organization}
We recall the definitions, a few basic tools, and the analysis of the
balanced case in \autoref{sectionPrelims}.
We present our main contribution in \autoref{sectionUnbalanced}.



\section{Preliminaries}
  \label{sectionPrelims}

A two sided matching environment consists of a collection 
$\D$ of ``doctors'' and $\H$ of ``hospitals''.
Each doctor $d\in \D$ has a set of preferences over all hospitals $\H$,
represented as (partial) list ordered from most preferred to 
least preferred, and vice-versa.
If a hospital is not on doctor $d$'s list, this means the hospital is an
``unacceptable'' match to $d$ (and vice-versa).
We also denote preferences of $d\in \D$ as a strict order $\succ_d$,
where $h_1 \succ_d h_2$ indicates that doctor $d$ prefers hospital $h_1$
over $h_2$, and vice-versa.
We write $h \succ_d \emptyset$ if $h$ is acceptable to $d$,
and $\emptyset \succ_d h$ otherwise.
We let $P = \big( (\succ_d)_{d\in\D}, (\succ_h)_{h\in\H} \big)$
denote a complete set of preferences for each agent.
Note that each doctor ranks each hospital and vice-versa.
A \emph{matching} is a set of vertex disjoint
edges in the complete bipartite graph on $\D\cup\H$.
We denote a matching as a function 
$\mu : \D\cup \H\to \D\cup \H\cup\{\emptyset\}$,
where $\mu(i)$ is the matched partner of
agent $i$, or $\mu(i) = \emptyset$ if agent $i$ is unmatched.

For a fixed set of preferences, matching $\mu$ is \emph{stable}
if there does not exist a pair $(d,h) \in \D\times\H$ which is not matched
in $\mu$ such that $h \succ_d \mu(d)$ and $d \succ_h \mu(h)$.
A pair $(d,h)$ is said to \emph{block} a matching $\mu$
if $d \succ_h \mu(h)$ and $h \succ_d \mu(d)$.
A \emph{pair} $(d,h)$ is called stable for $P$ if $\mu(d)=h$ in 
\emph{some} stable matching,
and $d$ is called a stable partner of $h$ (and vice-versa).
The canonical method of finding some stable matching is the
one-side-proposing deferred acceptance algorithm.
We denote this as $DPDA$ for the doctor-proposing algorithm,
and $HPDA$ for the hospital-proposing algorithm.




\begin{algorithm}
  \caption{$DPDA(P)$: Doctor-proposing deferred acceptance with preferences $P$}
  \label{algoMPDA}
\begin{algorithmic}[1]
  \State Let $U = \D$ be the set of unmatched doctors
  \State Let $\mu$ be the all empty matching
  \While { $U\ne \emptyset$ and some $d\in U$ has not proposed to every hospital}
    \State Pick any such $d$
    \State \label{line:propose} $d$ ``proposes to'' their favorite hospital $h$ which 
      they have not yet proposed to
      
    \If { $d \succ_h \mu(h)$ }
      \Comment{$h$ ``accepts $d$'s proposal''}
      \label{line:accept}
      \State If $\mu(h)\ne \emptyset$, add $\mu(h)$ to $U$
      \State Set $\mu(h) = d$, remove $d$ from $U$ 
    \EndIf
  \EndWhile
\end{algorithmic}
\end{algorithm}

\begin{theorem}[\cite{GaleS62}]
  \autoref{algoMPDA} always computes a stable matching $\mu_0$.
  Moreover, this is the \emph{doctor-optimal} stable outcome
  (that is, every doctor is matched in $\mu_0$ to their 
  favorite stable partner).
  In particular, the resulting matching is independent of the execution order.
\end{theorem}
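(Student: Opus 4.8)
The plan is to verify, in order: termination and that the procedure maintains a matching; that the output $\mu_0$ is stable; that $\mu_0$ is doctor-optimal in the stated sense; and finally that order-independence is an immediate consequence. Fix an arbitrary execution of \autoref{algoMPDA}. Everything rests on two monotonicity observations. First, a doctor proposes to hospitals in strictly decreasing order of $\succ_d$, so over time a doctor's tentative partner only worsens (and once a doctor is unmatched with an exhausted list, it stays unmatched). Second, by the acceptance test on \autoref{line:accept}, a hospital $h$ replaces its tentative partner only by a doctor it strictly prefers, so the successive non-$\emptyset$ values of $\mu(h)$ strictly increase in $\succ_h$. Termination follows since each doctor proposes to each hospital at most once, bounding the number of iterations by $|\D|\cdot|\H|$; and $\mu$ is visibly a matching at every step.

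Next, stability. Suppose for contradiction that $(d,h)$ blocks $\mu_0$, i.e.\ $h \succ_d \mu_0(d)$ and $d \succ_h \mu_0(h)$ (so in particular $d$ and $h$ are mutually acceptable). Since $d$'s final partner $\mu_0(d)$ is worse for $d$ than $h$ (or $d$ is unmatched, having proposed down its whole acceptable list), $d$ proposed to $h$ at some point of the execution. From that point on, $h$'s tentative partner is defined and, by the second monotonicity observation, weakly improves; hence $\mu_0(h) \succeq_h d$. Since $\mu_0(h)\neq d$ (the pair is unmatched in $\mu_0$), this gives $\mu_0(h) \succ_h d$, contradicting $d \succ_h \mu_0(h)$.

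The real work is doctor-optimality, which I would obtain from the classical claim that \emph{no doctor is ever rejected by a stable partner}, where $h$ is a stable partner of $d$ if $\mu(d)=h$ in some stable matching $\mu$. Suppose not, and consider the first moment in the execution at which some hospital $h$ rejects a doctor $d$ that is a stable partner of $h$; fix a stable matching $\mu$ with $\mu(d)=h$. At that moment $h$ holds (or is accepting) some doctor $d'$ with $d'\succ_h d$. Because this is the first rejection of any doctor by a stable partner, every hospital that has so far rejected $d'$ fails to be a stable partner of $d'$. Since $d'$ proposed to $h$, every hospital $d'$ strictly prefers to $h$ has already rejected $d'$; so if $d'$ strictly preferred $\mu(d')$ to $h$, then the stable partner $\mu(d')$ would be among those rejectors — impossible. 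Hence $h \succeq_{d'} \mu(d')$, and $h \neq \mu(d')$ since $\mu(h)=d\neq d'$; therefore $h \succ_{d'} \mu(d')$. Together with $d' \succ_h d = \mu(h)$, the pair $(d',h)$ blocks $\mu$, contradicting its stability. This proves the claim.

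Given the claim, doctor-optimality follows. If a doctor $d$ has stable partners, let $h^{\ast}$ be one it likes best; $d$ is never rejected by $h^{\ast}$, so $d$'s final match satisfies $\mu_0(d) \succeq_d h^{\ast}$, but $\mu_0$ is stable so $\mu_0(d)$ is itself a stable partner of $d$, forcing $\mu_0(d) = h^{\ast}$. If $d$ has no stable partner, then $d$ is unmatched in every stable matching, in particular in $\mu_0$. Thus $\mu_0$ is precisely the matching that assigns each doctor its favorite stable partner (and leaves the rest unmatched), which is a single well-defined matching independent of how the algorithm ran — giving order-independence. The one delicate point is the first-rejection induction: one must argue carefully that $d'$'s partner $\mu(d')$ in the stable matching cannot be strictly preferred by $d'$ to $h$, and be consistent about the $\emptyset$/acceptability conventions so that ``blocking pair,'' ``stable partner,'' and ``unmatched'' line up; the rest is bookkeeping on the two monotonicity observations.
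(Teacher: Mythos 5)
The paper does not prove this statement at all: it is quoted as a classical result and attributed to \cite{GaleS62}, so there is no in-paper argument to compare against. Your proof is the standard Gale--Shapley argument and it is correct. The two monotonicity observations give termination and the matching invariant; the stability argument (a blocking doctor must have proposed to the blocking hospital, whose tentative partner only improves) is sound, including the case where $d$ ends unmatched, since the loop only terminates once such a doctor has exhausted its list; and the ``no doctor is ever rejected by a stable partner'' claim via a first-rejection argument is the classical route to doctor-optimality, with the delicate points handled correctly: you use that all hospitals $d'$ prefers to $h$ rejected $d'$ strictly earlier, so by minimality none of them is a stable partner of $d'$, and you cover the case $\mu(d')=\emptyset$ through the acceptability convention (since $d'$ proposed to $h$, $h \succ_{d'} \emptyset$). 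Two small points to make explicit if you polish it: ``rejected'' must include both an immediate refusal on \autoref{line:accept} and being displaced later by a better proposer (your argument implicitly covers both), and in the final step you should note $\mu_0(d) \neq \emptyset$ when $d$ has a stable partner, which follows from $\mu_0(d) \succeq_d h^{\ast}$. Deriving order-independence as a corollary of the characterization ``each doctor gets its favorite stable partner'' is exactly the standard conclusion the paper's statement intends.
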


We also need the following classical fact:

\begin{theorem}[Rural Hospital Theorem, \cite{RothRuralHospital86}] 
  \label{claimRuralDoctors}
  For any set of preferences,
  the set of unmatched agents is the
  same across every stable outcome.
\end{theorem}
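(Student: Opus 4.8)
The plan is to fix the doctor-optimal stable matching $\mu_0$ (it exists by the Gale--Shapley theorem above) and to show that an \emph{arbitrary} stable matching $\mu$ matches exactly the same set of doctors and exactly the same set of hospitals as $\mu_0$; since the set of unmatched agents is then the complement of these two sets, the statement follows. I write $D(\sigma)$ and $H(\sigma)$ for the sets of doctors and of hospitals matched in a matching $\sigma$. Throughout I use the standard convention that stable matchings are individually rational — no agent is matched to an unacceptable partner — which is needed for the statement to even be true.

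First I would record the two ``one-sided'' facts. On the doctor side, doctor-optimality of $\mu_0$ gives, for every doctor $d$ matched in $\mu$, that $\mu_0(d) \succeq_d \mu(d) \succ_d \emptyset$, so $d$ is matched in $\mu_0$; hence $D(\mu) \subseteq D(\mu_0)$. On the hospital side I would prove the dual statement that $\mu_0$ is hospital-\emph{pessimal}, i.e.\ $\mu(h) \succeq_h \mu_0(h)$ for every hospital $h$: if instead $\mu_0(h) \succ_h \mu(h)$, then setting $d = \mu_0(h)$, doctor-optimality gives $h = \mu_0(d) \succeq_d \mu(d)$, and this is strict (otherwise $\mu(h) = d = \mu_0(h)$, impossible), so $(d,h)$ would block $\mu$ — a contradiction. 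Consequently every hospital matched in $\mu_0$ receives a still-acceptable, hence nonempty, partner in $\mu$, so $H(\mu_0) \subseteq H(\mu)$.

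Now I would close the argument by counting. In a one-to-one matching the number of matched doctors equals the number of matched hospitals (both count the edges), so
\[
  |D(\mu)| = |H(\mu)| \ \ge\ |H(\mu_0)| = |D(\mu_0)| \ \ge\ |D(\mu)|,
\]
using $H(\mu_0)\subseteq H(\mu)$ for the first inequality and $D(\mu)\subseteq D(\mu_0)$ for the second. Every term is therefore equal, and combined with the two inclusions this forces $D(\mu) = D(\mu_0)$ and $H(\mu) = H(\mu_0)$. Since $\mu$ was an arbitrary stable matching, the set of unmatched agents $(\D \setminus D(\mu_0)) \cup (\H \setminus H(\mu_0))$ is the same in every stable matching.

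The only step with real content is the hospital-pessimality claim. Knowing merely that $\mu_0$ is doctor-optimal and that (say) $\mu_z = HPDA(P)$ is hospital-optimal yields inclusions $D(\mu)\subseteq D(\mu_0)$ and $H(\mu)\subseteq H(\mu_z)$ that point the ``same way,'' which does not pin down $|D(\mu)|$; the blocking-pair argument that turns doctor-optimality into hospital-pessimality is exactly what supplies an inclusion in the opposite direction and lets the count be forced. I expect that observation to be the crux, with everything else routine bookkeeping.
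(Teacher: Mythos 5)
Your proof is correct. Note, however, that the paper does not prove this statement at all: it is quoted as a classical result with a citation to Roth (1986) and used as a black box (in the proof of Lemma~\ref{claimWomenTruncateToGood}), so there is no in-paper argument to compare against. Your argument is the standard one: doctor-optimality of $\mu_0$ gives $D(\mu)\subseteq D(\mu_0)$, the blocking-pair argument upgrades doctor-optimality to hospital-pessimality and hence $H(\mu_0)\subseteq H(\mu)$, and the edge-counting identity $|D(\sigma)|=|H(\sigma)|$ forces both inclusions to be equalities. All steps check out, including the treatment of unmatched doctors in the pessimality step (where $h\succ_d\emptyset$ comes from individual rationality of $\mu_0$). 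Your explicit remark that stability must include individual rationality is a worthwhile catch, since the paper's definition of stability as written only rules out blocking pairs, and without individual rationality the theorem is false; with that convention made explicit, your proof is complete and self-contained.
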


We study uniformly random complete preference lists.
That is, each doctor has one of the $|\H|!$ possible preferences ranking
all of the hospitals, chosen uniformly at random.
In these markets, $DPDA$ can be implemented as a stochastic process using
the ``principle of deferred decisions''. 

\begin{definition}
  \label{def:DPDAPrime}
  Define $DPDA'$ to be the same algorithm as 
  \autoref{algoMPDA}, except 
  preferences are generated as follows:
  \begin{itemize}
    \item On Line~\ref{line:propose}, doctor $d$ chooses a hospital
      uniformly at random from the set of all hospitals.
    \item On Line~\ref{line:accept}, hospital $h$ ignores any proposal from
      a doctor who has already proposed to $h$. If $d$ has not yet proposed
      to $h$ and $h$ has previously seen $k$ proposals from distinct
      doctors, then $h$ accepts the proposal with probability
      $1/(1+k)$.
  \end{itemize}
\end{definition}


\begin{lemma}
  \label{lemmaReproposals}
  The distribution of matches generated by $DPDA'$ is identical to the
  distribution of $DPDA(P)$, where preference $P$
  are generated independently and uniformly randomly for all agents.

  Moreover, the total number of proposals in $DPDA'$ stochastically
  dominates the total number of proposals in $DPDA(P)$.
\end{lemma}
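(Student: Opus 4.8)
The plan is to realize both processes on a common probability space via the principle of deferred decisions, treating the hospitals' randomness (the $1/(1+k)$ acceptance rule) and the doctors' randomness (the uniform-with-repeats proposal rule) in turn.

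First I would fix the doctors' proposal behavior and argue that the hospital side of $DPDA'$ faithfully simulates uniformly random hospital preferences. The only way a hospital $h$'s order $\succ_h$ is ever consulted in $DPDA$ is through comparisons ``is the current proposer $d$ preferred to the incumbent $\mu(h)$?''. The incumbent is always $\succ_h$-best among all doctors who have so far proposed to $h$, and a doctor who re-proposes to $h$ is not the incumbent (an incumbent is in no doctor's proposing set) and so loses this comparison; hence it is correct for $h$ to ignore re-proposals. For a uniformly random $\succ_h$, conditioned on the relative order of the first $k$ distinct proposers, the $(k+1)$-st is equally likely to land in any of the $k+1$ slots, so it becomes the new incumbent with probability exactly $1/(1+k)$, independently across these insertions. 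This is precisely Line~\ref{line:accept} of $DPDA'$, so $DPDA'$ and $DPDA(P)$ induce the same distribution over the sequence of accepted/rejected proposals (and hence over matchings), provided the doctor side matches as well.

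Next I would handle the doctors. Give each doctor $d$ an i.i.d.\ sequence $h_{d,1},h_{d,2},\dots$ of uniformly random hospitals and let $d$'s $k$-th proposal in $DPDA'$ be to $h_{d,k}$, exactly as in \autoref{def:DPDAPrime}. Since $\H$ is finite, almost surely each doctor eventually exhausts $\H$; deleting repeats from $d$'s sequence yields the order of first appearances $\pi_d$, which is a uniformly random permutation of $\H$, and the $\pi_d$ are independent. I claim that, run on these bits, $DPDA'$ produces exactly the matching $DPDA$ produces on the profile with doctor lists $(\pi_d)_d$ and the hospital coins from the first step. Indeed, a ``wasted'' re-proposal by $d$ to $h$ is ignored by $h$ and leaves $U$ and $\mu$ unchanged while $d$ stays in $U$, so the state-changing steps of $DPDA'$ are, in order, precisely the proposals $DPDA$ makes when each $d$ proposes down $\pi_d$, with the same acceptance decisions; combined with almost-sure termination of $DPDA'$ and the fact that the output of deferred acceptance is independent of the order in which proposals are processed, the two final matchings coincide. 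As $(\pi_d)_d$ is distributed as a uniformly random doctor profile, this proves the first claim.

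Finally, the stochastic-domination claim falls out of the same coupling: in that run, the number of proposals of $DPDA'$ equals the number of proposals of $DPDA$ on $(\pi_d)_d$ plus the nonnegative number of wasted re-proposals, so it pointwise dominates the coupled copy of the proposal count of $DPDA(P)$. I expect the main obstacle to be making the coupling in the third step fully rigorous — verifying that re-proposals really are no-ops for the joint state and that the correspondence between state-changing steps of $DPDA'$ and proposals of $DPDA(\pi)$ is order-for-order faithful — together with the (routine) almost-sure termination argument.
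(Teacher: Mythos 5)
Your proof is correct and follows essentially the same route as the paper: the principle of deferred decisions (uniform-without-replacement doctor lists simulated by uniform-with-replacement proposals with repeats ignored, and the $1/(1+k)$ acceptance rule realizing a uniformly random hospital order), together with the coupling in which the proposal count of $DPDA'$ exceeds that of $DPDA(P)$ exactly by the wasted re-proposals, which gives the stochastic domination. You simply spell out the coupling and the order-independence step in more detail than the paper does.
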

\begin{proof}
  Generating the preference lists of the doctors is equivalent to drawing
  uniformly random hospitals without replacement, and $DPDA'$ ignores 
  repeated proposals.
  If a hospital $h$ sees $k+1$ proposals, then any specific doctor
  who proposed will be $h$'s favorite with probability $1/(1+k)$.
  Moreover, when $h$ sees a new proposal,
  their choice of favorite is independent of the order in which $h$
  saw or accepted proposals in the past.
  Thus, $DPDA'$ is equivalent to $DPDA(P)$ for a uniformly random $P$.
  Finally, observe that we can recover exactly the distribution of
  proposals made in $DPDA(P)$ by filtering out repeated proposals,
  and this coupling proves that
  the number of proposals made in $DPDA(P)$ is stochastically dominated
  by the number of proposals in $DPDA'$.
\end{proof}

By the \emph{rank} a (matched) doctor $d$ gets in a matching $\mu$,
we mean the index of $\mu(d)$ on $d$'s preference list 
(where lower is better).
That is,
\begin{equation*}
  \rank_d(\mu) = 1 + \big| \{ h\in\H : h \succ_d \mu(h) \} \big|
\end{equation*}
In particular, if an agent is unranked, we define their rank as one higher
than their number of acceptable agents.




  \subsection{The Balanced Case} \label{sectionBalanced}

We first consider uniformly random matching markets with $n$ doctors and
$n$ hospitals.
We refer to such a market as \emph{balanced}, in that there are there
are the same number of doctors and hospitals.
It is a classical exercise to show that the expectation
of the average rank that each doctor gets in $DPDA(P)$ is 
$O(\log{n})$.
Correspondingly, the hospitals receive partners which they rank
$\Omega(n/\log n)$ in expectation.
Thus, the proposing side is at a significant advantage.

\begin{theorem}[\cite{WilsonAnalysisStable72}]
  \label{thrm:balancedCase}
  Consider a balanced, uniformly random matching market,
  and let $\mu$ be the result of $DPDA$.
  Then for any $d\in\D$ and $h\in\H$, we have
  \begin{align*}
    \E{\rank_d(\mu)} = O(\log n)
    && \E{\rank_h(\mu)} = \Omega(n/\log n)
  \end{align*}
\end{theorem}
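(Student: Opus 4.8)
The plan is to analyze $DPDA'$ (\autoref{def:DPDAPrime}), which by \autoref{lemmaReproposals} has the same output distribution as $DPDA(P)$ on uniform preferences, and to control the total number of proposals $T$ made during its execution. The key observation is a classical coupon-collector-style argument: since $DPDA'$ terminates once every hospital has received at least one proposal (at which point no doctor is unmatched, as $|\D| = |\H| = n$, and in fact the doctor side cannot be ejected past the point where all hospitals are full), the process is essentially a balls-into-bins process where proposals are uniform over the $n$ hospitals. The number of proposals needed until every hospital has been hit at least once is, in expectation, $\Theta(n \log n)$, and with the stochastic-domination half of \autoref{lemmaReproposals}, $\E{T} = O(n \log n)$ for the true process. (One must be slightly careful: $DPDA'$ may continue slightly past the first time all hospitals are hit, because an ejected doctor keeps proposing, but each doctor proposes at most $n$ times, and a standard argument bounds the overshoot — I expect to argue that once all hospitals are matched, the remaining "chain" of displacements has expected length $O(\log n)$, or simply absorb this into the $O(n\log n)$ bound directly via the domination statement and a union bound.)

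Next I would convert a bound on total proposals into a bound on average doctor rank. The rank that doctor $d$ receives in $DPDA$ is at most the number of proposals $d$ personally makes (since $d$ proposes down their list in order, and their final partner is the last hospital they proposed to — or better, as $d$'s partner in $\mu_0$ is $d$'s favorite stable partner, which is weakly better than wherever $d$ ends $DPDA$; but $DPDA$ itself already produces $\mu_0$). Hence $\sum_{d \in \D} \rank_d(\mu) \le T$, so $\E{\sum_d \rank_d(\mu)} \le \E{T} = O(n\log n)$, giving $\E{\rank_d(\mu)} = O(\log n)$ after dividing by $n$ and invoking symmetry over doctors (all doctors are exchangeable under the uniform distribution, so $\E{\rank_d(\mu)}$ is the same for every $d$, hence equals $\frac1n \E{\sum_d \rank_d(\mu)}$).

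For the hospital side, I would relate hospital ranks to the same quantity $T$. When a hospital $h$ has received $k$ proposals from distinct doctors over the course of $DPDA'$, its current partner is uniformly distributed among those $k$ doctors, so $\E{\rank_h(\mu) \mid h \text{ saw } k \text{ proposals}} = (n+1)/(k+1) \ge n/(k+1)$ roughly (the partner is a uniformly random one of the $k$ proposers, and the rank of a uniform element among $k$ random doctors has expectation about $(n+1)/(k+1)$). Let $K_h$ be the number of distinct proposals $h$ receives; then $\sum_h K_h \le T$, so by Markov most hospitals have $K_h = O(\log n)$ on average, and by convexity (Jensen applied to the convex function $k \mapsto n/(k+1)$) the average hospital rank is $\Omega(n/\log n)$. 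Concretely: $\frac1n \sum_h \E{\rank_h(\mu)} \gtrsim \frac1n \sum_h \E{\frac{n}{K_h+1}} \ge \E{\frac{n}{\frac1n\sum_h K_h + 1}} \ge \E{\frac{n}{T/n + 1}}$, and since $\E{T} = O(n\log n)$ one shows this is $\Omega(n/\log n)$ (handling the tail of $T$ with a high-probability bound $T = O(n\log n)$, which follows from a Chernoff/coupon-collector concentration argument).

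The main obstacle I anticipate is the hospital-side step: moving from "total proposals is small" to "every individual hospital's expected rank is large" is not immediate, because a few hospitals could absorb many proposals. The fix is the convexity argument above together with concentration of $T$ around $\Theta(n \log n)$ — and one more appeal to symmetry among hospitals, so that the per-hospital expectation equals the average. I would also need the elementary fact that the expected rank of a uniformly random one of $k$ uniformly random distinct hospitals on $d$'s list is $\Theta(n/k)$, which is a short computation I would state as a sublemma rather than grind through. The doctor side, by contrast, is essentially just coupon collector plus the observation $\sum_d \rank_d \le T$, and should be quick.
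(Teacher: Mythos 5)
Your proposal takes essentially the same route as the paper: couple $DPDA$ with $DPDA'$ via \autoref{lemmaReproposals}, bound the total number of proposals by a coupon-collector argument (in the balanced case there is no ``overshoot'' to worry about---once the $n$th distinct hospital is proposed to, every hospital holds a partner, hence every doctor is matched and the algorithm halts), use exchangeability of the doctors for the $O(\log n)$ bound, and use symmetry over hospitals plus Jensen's inequality for the $\Omega(n/\log n)$ bound. One slip worth fixing: the hospital's final partner is not ``uniformly distributed among its $k$ proposers'' (that would give expected rank about $n/2$) but is its \emph{favorite} among the $k$ proposers, whose expected rank is about $(n+1)/(k+1)$---this is the quantity your calculation actually uses, so the argument goes through exactly as in the paper, and no concentration of $T$ is needed since Jensen applied to $t\mapsto n/(1+t)$ already gives $\E{n/(1+Y_h)} \ge n/(1+\E{Y_h}) = \Omega(n/\log n)$.
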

\begin{proof}

  The key insight is that, for random balanced matching markets,
  $DPDA(P)$ essentially behaves like a ``coupon collector'' random variable.
  In more detail:
  \begin{itemize}
    \item The sum of the ranks that the doctors have for their matched
      hospital in $DPDA$ is exactly the total number of proposals made.
    \item For balanced matching markets, $DPDA$ (and $DPDA'$)
      terminates as soon as $n$ distinct hospitals are proposed to.
    \item Let $Y$ denote the total number of proposals in $DPDA'$,
      and let $Z_i$ denote the number of proposals made after the
      $(i-1)$th distinct hospital receives a proposal and before 
      (and including when) the $i$th hospital receives a proposal.
      Thus, $Y = \sum_{i=1}^n Z_i$.
      Because each proposal in $DPDA'$ is uniformly random,
      $Z_i$ is distributed as a geometric random variable with success
      parameter $(n - i+1)/n$.
      Thus,
      \begin{equation*}
        \E{Y} = \sum_{i=1}^n \E{Z_i} = \sum_{i=1}^n \frac{n}{n-i+1}
        = n H_n
      \end{equation*}
      where $H_n = \Theta(\log n)$ denotes the $n$th harmonic number.
  \end{itemize}

  By \autoref{lemmaReproposals}, the total number of proposals in $DPDA(P)$
  is stochastically dominated by $Y$. Moreover, the number of proposals each
  doctor makes (i.e. the rank that doctor receives) is an identically
  distributed random variable, so the expected rank a doctor receives is at
  most $H_n = O(\log n)$.

  On the other hand, the total proposals are equally likely to be received by
  any specific hospital, so the expected number of proposals a hospital
  receives is at most $H_n = O(\log n)$.
  Fix a hospital $h$ and let $Y_h$ denote the number of
  proposals $h$ receives.
  We can calculate the expected rank $h$ receives as
  \begin{align*}
    & \Es{y\sim Y_h}{ 
      \E{ \text{rank of $h$'s favorite doctor out of $y$ proposals} }
    } \\
    & \qquad =
    \Es{y\sim Y_h}{ \frac{n}{1+y} } 
    \ge \frac{n}{1+\E{Y_h}} \ge \Omega\left( \frac{n}{\log n} \right),
  \end{align*}
  where the first inequality follow by Jensen's inequality.
\end{proof}

%


\section{Unbalanced Markets and the Effect of Competition}
  \label{sectionUnbalanced}\label{sectionRandom}
  




We now turn to study ``unbalanced'' random matching markets,
i.e. ones in which one side
(say, the hospitals) has strictly more agents than the other side.
For the sake of simplicity, we assume there are $n$
doctors and exactly $n+1$ hospitals.

By adapting the techniques of \autoref{sectionBalanced},
we show that in the hospital-optimal matching,
the expected rank a hospital gets
is $\Omega\big(\frac{n}{\log{n}}\big)$.
Up to constants, this matches the expected rank the hospital would receive in
doctor-optimal stable matching.
Informally, we conclude that in \emph{any} stable matching, the
``short side'' essentially picks their matches and the ``long side''
is essentially picked, regardless of who is actually the proposing side.

\subsection{Intuition and a ``first attempt''}

The stark difference between the balanced and unbalanced case is
counter-intuitive at first glance.
Thus, we first give some intuition for why one might expect
the unbalanced case to differ fundamentally from the balanced case.
Consider \emph{hospital-proposing} deferred acceptance
with $n+1$ hospitals and $n$ doctors.
In the balanced case, this random process would terminate when every doctor
has been proposed to once.
But now, because some hospital must go unmatched, the algorithm will only terminate
once \emph{some hospital has proposed to every doctor}. This is a very different
random process, and one can imagine it must run for a much longer time, forcing
the proposing hospitals to be matched to much worse partners than in the balanced
case. Unfortunately, this random process is fairly difficult to
analyze\footnote{
  For instance, to get a useful analysis, we'd need to keep track of which
  hospital is currently proposing, which doctors they have already proposed to,
  and how likely each doctor is to accept a new proposal.
}, so we take a different approach. 

\subsection{Proof}
\label{sectionUnbalancedProof}

The first key to our proof is the following claim,
which is based on the concept of ``list truncation''
as in~\cite{ImmorlicaHonestyStability05}.

\begin{lemma}[Adapted from~\cite{ImmorlicaHonestyStability05}]
  \label{claimWomenTruncateToGood}
  Fix a hospital $h^*$. For any preference set $P$ and $i\in[n]$,
  let $P_i$ denote altering $P$ by truncating the preference
  list of $h^*$ at place $i$ 
  (i.e. marking doctors ranked worse than $i$ unacceptable).
  Then $h^*$ has a stable partner of rank better than $i$
  if and only if $h^*$ is matched in $DPDA(P_i)$.
\end{lemma}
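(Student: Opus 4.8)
The plan is to work directly with matchings and blocking pairs, reducing both directions to a single \emph{transfer} observation about how stability changes when $h^*$'s list is truncated, and then invoking the Rural Hospital Theorem (\autoref{claimRuralDoctors}) to pass between ``matched in some stable matching of $P_i$'' and ``matched in $DPDA(P_i)$''. The starting point is that $P$ and $P_i$ differ only in $h^*$'s list: every other agent has identical preferences in the two markets, and $h^*$'s order in $P_i$ is that of $P$ with every doctor it ranks worse than $i$ made unacceptable and the relative order of the surviving doctors unchanged. Hence a doctor $d$ is acceptable to $h^*$ in $P_i$ exactly when $h^*$ ranks $d$ better than $i$ in $P$, and if $d$ is acceptable to $h^*$ in $P_i$ while $d' \succ_{h^*} d$ in $P$, then $d'$ is also acceptable to $h^*$ in $P_i$ and still ranked above $d$ there.

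The transfer observation I would establish is: for any matching $\nu$ in which $h^*$ is matched to a doctor it ranks better than $i$, $\nu$ is stable in $P$ if and only if $\nu$ is stable in $P_i$. Individual rationality transfers immediately, since truncation only removes acceptabilities and the only matched pair it could affect, namely $\{h^*, \nu(h^*)\}$, is assumed to survive it. For blocking pairs, consider any $(d,h)$. If $h \ne h^*$, then $d$ and $h$ have the same preferences in $P$ and in $P_i$, so $(d,h)$ blocks $\nu$ in one market iff it blocks in the other. If $h = h^*$, then for $(d, h^*)$ to block in either market one needs $d \succ_{h^*} \nu(h^*)$; since $\nu(h^*)$ is acceptable to $h^*$ in $P_i$, the structural facts above force $d$ itself to be acceptable to $h^*$ in $P_i$ and ranked above $\nu(h^*)$ there, while the remaining requirement $h^* \succ_d \nu(d)$ concerns only $d$'s unchanged preferences; so again $(d, h^*)$ blocks $\nu$ in $P$ iff it blocks in $P_i$.

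Both directions of the lemma then follow. If $h^*$ has a stable partner $d^*$ of rank better than $i$ in $P$, pick a stable matching $\mu$ of $P$ with $\mu(h^*) = d^*$; by the transfer observation $\mu$ is stable in $P_i$ as well, so $h^*$ is matched in \emph{some} stable matching of $P_i$, hence by \autoref{claimRuralDoctors} in \emph{every} stable matching of $P_i$, in particular in $DPDA(P_i)$. Conversely, if $h^*$ is matched in $\mu_i := DPDA(P_i)$, then $\mu_i(h^*)$ is acceptable to $h^*$ in $P_i$ and is therefore ranked better than $i$ in $P$; applying the transfer observation to $\mu_i$ (which is stable in $P_i$) shows $\mu_i$ is stable in $P$, so $\mu_i(h^*)$ witnesses a stable partner of $h^*$ in $P$ of rank better than $i$.

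I do not anticipate a real obstacle. The only delicate point is the $h = h^*$ case of the transfer observation, where one must rule out both that truncation erases a blocking pair that $P$ had and that it manufactures a new one in $P_i$; both are controlled by the same fact, that any doctor who blocks with $h^*$ is by definition preferred by $h^*$ to $\nu(h^*)$, which the truncation preserves by hypothesis. The only external ingredient is the Rural Hospital Theorem, invoked once to convert ``matched in some stable matching of $P_i$'' into ``matched in $DPDA(P_i)$''.
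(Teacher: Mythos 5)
Your proposal is correct and follows essentially the same route as the paper: stability transfers between $P$ and $P_i$ for matchings in which $h^*$ is matched above rank $i$, combined with the Rural Hospital Theorem to pass to $DPDA(P_i)$. The only difference is presentational --- you package both directions of the transfer into a single observation, whereas the paper argues each direction inline (the forward one by contradiction).
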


\begin{proof}
  ($\Rightarrow$)
  Suppose for contradiction that $h^*$ has a stable partner of rank better than $i$,
  yet $h^*$ is not matched in $DPDA(P_i)$.
  Let $\mu = DPDA(P_i)$, and
  let $\mu'$ be any stable matching (for preferences $P$)
  where $h^*$ receives a partner better than rank $i$.
  Observe that $\mu'$ must be stable for preferences $P_i$,
  because any pair blocking for $P_i$ must be blocking for preferences $P$.
  Thus, the set of matched agents in $\mu$ would be identical to that of $\mu'$,
  by \autoref{claimRuralDoctors}.
  In particular, $h^*$ should be matched in $\mu'$, a contradiction.

  ($\Leftarrow$) Now, suppose $h^*$ is matched in $\mu' = MPDA(P_i)$.
  We know $\mu'(h^*)$ is ranked by $h^*$ better than $i$, so it suffices
  to prove that $\mu'$ is stable for preferences $P$.
  Certainly, $\mu'$ is stable for preferences $P_i$.
  Why might a matching, stable for $P_i$, not be stable for preferences $P$?
  The only way is if the blocking pair $(d,h)$ for $P$
  is such that $h=h^*$ and $h^*$ truncated $d$ off their preference list.
  But $h^*$ only accepts proposals from doctor ranked better than $i$,
  and they got a match, so they can't possibly be matched below $d$.
  Thus $\mu'$ is stable for preferences $P$,
  and $h^*$ has a stable partner of rank better than $i$.
\end{proof}

With the above lemma, the proof sketch is as follows:
\begin{itemize}
  \item By \autoref{claimWomenTruncateToGood},
    hospital $h^*$'s rank for their partner in hospital-optimal
    outcome is equal to the minimum
    rank $i$ at which they can truncate their list while still being matched in
    $DPDA(P_i)$.

  \item Consider running $DPDA'$, modified such that 
    $h^*$ {\bf rejects all proposals they receive}.
    Now, $DPDA'$ terminates exactly when all hospitals \emph{other than $h^*$}
    receive a match.
    The number of proposals again follows a coupon-collector-like
    random variable, and we expect $O(n\log n)$ total proposals. 
    In particular, $h^*$ should get $O(\log n)$ total proposals 
    before $DPDA'$ terminates.

  \item Thus, in expectation, the rank of the best proposal $h^*$ received
    is $\Omega(n / \log n)$.
    Thus, in expectation $h^*$ has no stable partners better than this rank.
\end{itemize}

Below we make this intuition formal.

\begin{theorem}\label{thrmUnbalancedExpectation}
  Consider a uniformly random matching market 
  with $n$ doctors and $n+1$ doctors,
  and let $\mu$ denote {\bf \emph{any}} stable matching.
  For any $h\in\H$, we have
  \begin{align*}
    \E{\rank_h(\mu)} = \Omega(n/\log n)
  \end{align*}
\end{theorem}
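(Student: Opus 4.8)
The plan is to follow the three-step sketch already outlined in the text, making each piece quantitative. Fix a hospital $h^*$. By the Rural Hospital Theorem (\autoref{claimRuralDoctors}) and the symmetry of $DPDA$ vs. $HPDA$, it suffices to bound the rank $h^*$ receives in the \emph{hospital-optimal} stable matching, since that is the \emph{best} rank $h^*$ can hope for over all stable matchings; a lower bound there lower-bounds $\rank_{h^*}(\mu)$ for every stable $\mu$. By \autoref{claimWomenTruncateToGood}, the rank $h^*$ gets in the hospital-optimal matching equals the least $i$ such that $h^*$ is still matched in $DPDA(P_i)$, where $P_i$ truncates $h^*$'s list at position $i$. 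Equivalently, if $R$ denotes the rank (on $h^*$'s list) of the best doctor who ever proposes to $h^*$ during a run of $DPDA'$ in which $h^*$ \emph{rejects every proposal}, then $h^*$'s hospital-optimal rank is at least $R$: for any $i < R$, no doctor acceptable under $P_i$ ever proposes to $h^*$, so $h^*$ goes unmatched in $DPDA(P_i)$. (Here I use that running $DPDA'$ with $h^*$ rejecting everything produces exactly the proposal stream that $DPDA(P_i)$ would see restricted to the other hospitals, together with the doomed proposals to $h^*$.)

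Next I would analyze this modified $DPDA'$ — the one where $h^*$ always rejects. Since $h^*$ never holds a doctor, the process terminates exactly when all $n$ hospitals \emph{other than} $h^*$ have received a proposal (at which point every doctor is matched, as there are $n$ doctors and $n$ non-$h^*$ hospitals, and the rejected-by-$h^*$ doctors keep proposing until matched). Let $Y$ be the total number of proposals. By the same coupon-collector decomposition as in \autoref{thrm:balancedCase}: writing $Z_j$ for the number of proposals made between the $(j-1)$st and $j$th distinct non-$h^*$ hospital getting its first proposal, each proposal lands among the $n+1$ hospitals uniformly, so $Z_j$ is geometric with success probability $(n-j+1)/(n+1)$, giving $\E{Y} = \sum_{j=1}^n \frac{n+1}{n-j+1} = (n+1)H_n = O(n\log n)$. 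Let $Y_{h^*}$ be the number of proposals $h^*$ receives; by symmetry among the $n+1$ hospitals, $\E{Y_{h^*}} = \E{Y}/(n+1) = H_n = O(\log n)$.

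Finally, condition on $Y_{h^*} = y$. The doctors who propose to $h^*$ are (by the principle of deferred decisions and symmetry of doctors' lists) a set whose ranks on $h^*$'s list are uniform, so the best of $y$ such proposals has expected rank $\frac{n}{1+y}$ (as in the balanced proof). Hence
\begin{align*}
  \E{R} = \Es{y\sim Y_{h^*}}{\frac{n}{1+y}} \ge \frac{n}{1+\E{Y_{h^*}}} = \frac{n}{1+H_n} = \Omega\!\left(\frac{n}{\log n}\right),
\end{align*}
by Jensen's inequality. Combining with the first paragraph, $\E{\rank_h(\mu)} \ge \E{R} = \Omega(n/\log n)$ for every stable $\mu$.

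The main obstacle I expect is making the coupling in the first two paragraphs airtight: one must check that (i) the rejected-by-$h^*$ version of $DPDA'$ really is a legitimate instance of deferred acceptance on truncated preferences so that \autoref{lemmaReproposals} and \autoref{claimWomenTruncateToGood} apply with the right parameters, and (ii) the "best proposer to $h^*$ has uniformly random rank" claim is valid even though $h^*$'s list interacts with the doctors' lists only through which doctors find $h^*$ acceptable — but since $h^*$ rejects everyone, no conditioning on $h^*$'s preference order leaks into the process, so the ranks of $h^*$'s proposers are indeed exchangeable. A minor additional point is handling the stochastic-domination direction of \autoref{lemmaReproposals} correctly (we need an upper bound on $\E{Y_{h^*}}$, which is exactly the direction that lemma provides). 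Everything else is a direct transcription of the balanced-case argument.
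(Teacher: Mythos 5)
Your proposal takes essentially the same route as the paper: reduce to the hospital-optimal matching, use \autoref{claimWomenTruncateToGood} to identify $h^*$'s best stable rank with the best proposal it receives in a run where it rejects everything, bound the expected number of proposals $h^*$ sees by a coupon-collector analysis of $DPDA'$, and convert that into an $\Omega(n/\log n)$ rank bound. The one step that needs repair as written is the claim that $\E{Y_{h^*}} = \E{Y}/(n+1) = H_n$ ``by symmetry among the $n+1$ hospitals.'' The hospitals are \emph{not} exchangeable in the modified process: its stopping rule is ``stop when every hospital other than $h^*$ has received a proposal,'' so for instance the final proposal never goes to $h^*$, and each non-$h^*$ hospital receives at least one proposal while $h^*$ may receive none. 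The identity is nonetheless correct, but it requires an argument: since each proposal in $DPDA'$ is an i.i.d.\ uniform draw over the $n+1$ hospitals and the total number of proposals is a stopping time for that sequence, Wald's identity yields $\E{Y_{h^*}} = \E{Y}/(n+1)$; the paper sidesteps the issue by computing $\E{Y_i \mid Z_i = z} = (z-1)/i$ within each coupon-collector phase and summing to get $H_n$ directly. Beyond that, your finish via Jensen's inequality (convexity of $y\mapsto n/(1+y)$, with the caveat that the displayed equality should be a ``$\ge$'' since repeated proposals only help the bound) is a valid alternative to the paper's Markov-inequality step, and your coupling remarks (the all-reject run coincides with $DPDA(P_i)$ until an acceptable doctor proposes, and $h^*$'s ranking of its proposers is uniform and independent of the run because its preferences never influence the process) are exactly the points the paper also relies on, at a comparable level of detail.
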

\begin{proof}
  Fix a hospital $h^*$. For uniformly randomly generate preferences $P$,
  let $P_n$ denote altering $h^*$'s preference list by submitting
  an empty list (i.e. marking all doctors unacceptable).
  Let $Y$ denote the number of proposals $h^*$ receives in $DPDA(P_n)$.

  \begin{claim} \label{Claim:constantBoundOnCC}
    $Y \leq 3\log{n}$ with probability at least $1/2$.
  \end{claim}
  \begin{proof}
    Let $DPDA'_n$ denote running $DPDA'$ as in \autoref{def:DPDAPrime},
    but where $h^*$ rejects every proposal they receive.
    Let $\YD$ denote the number the number of proposals $h^*$
    receives in $DPDA'_n$.
    By \autoref{lemmaReproposals},
    $\YD$ stochastically dominates $Y$.
    For $i\in[n]$,
    let $Z_i$ the number of proposals in $DPDA_n'$ after the $(i-1)$th
    distinct hospital in $\H\setminus\{h^*\}$ receives a proposal,
    and before (and including when) the $i$th distinct hospital in 
    $\H\setminus\{h^*\}$ receives a proposal.
    As each proposal in $DPDA_n'$ is independently distributed across the
    $n+1$ hospitals, and there are $n+1-i$ hospitals in
    $\H\setminus\{h^*\}$ who have not yet been proposed to,
    $Z_i$ is distributed as a
    geometric random variable with success parameter $(n+1-i)/(n+1)$.
    Thus, $\E{Z_i} = (n+1)/(n+1-i)$.

    Let $Y_i$ denote the number of proposals in $Z_i$ which go to $h^*$.
    Conditioned on $Z_i = z$, the first $z-1$ proposals among the proposals
    in $Z_i$ are uniformly distributed among $h^*$ and the $i-1$ hospitals
    in $\H\setminus\{h^*\}$ who have received a proposal.
    Thus, each of these $z-1$ proposals goes to $h^*$ with probability
    $1/i$. The last of the $z$ proposal in $Z_i$ never goes to $h^*$, by
    definition. Thus, $\E{Y_i | Z_i = z} = (z-1)/i$, and we have:
    \begin{align*}
        \E{\YD} &
        = \sum_{i=1}^n \E{Y_i} 
        = \sum_{i=1}^n \E{ \E{Y_i | Z_i} } 
        = \sum_{i=1}^n \E{ (Z_i - 1)/i } 
        \\ &
        = \sum_{i=1}^n \frac 1 i \left(\frac{n+1}{(n+1-i)} - 1 \right)
        = \sum_{i=1}^n \frac 1 i \frac{i}{n+1-i} 
        = \sum_{i=1}^n \frac{1}{n+1-i} 
        = H_n
    \end{align*}

  By Markov's inequality and the fact that $\YD$ stochastically dominates
  $Y$, we get
  \begin{align*}
    \P{Y \le 3\log n} \ge \P{\YD \le 3\log n } 
      \ge \P{\YD \le 2 H_n } \ge 1/2
  \end{align*}

  \end{proof}

  By \autoref{claimWomenTruncateToGood}, the rank $h^*$ gets in the
  hospital-optimal stable matching $\mu$ is exactly the rank of their 
  favorite doctor among the $Y$ who proposed to $h^*$ during $DPDA_n$.
  As the $n$ doctors are uniformly ranked by $h^*$, we have
  \begin{align*}
    \E{ \rank_{h^*}(\mu) }
    & \ge \P{ Y \le 3\log n } \E{ \rank_{h^*}(\mu) | Y \le 3\log n } \\
    & \ge \frac 1 2 \cdot \frac n {3\log n}
      = \Omega\left(\frac n {\log n}\right)
  \end{align*}
  As $\mu$ is the hospital-optimal stable matching,
  this holds for all other stable matchings as well.



\end{proof}



\bibliographystyle{alpha}
\bibliography{Matching}{}

\begin{thebibliography}{AKL17}

\bibitem[AKL17]{AshlagiUnbalancedCompetition17}
Itai Ashlagi, Yash Kanoria, and Jacob~D. Leshno.
\newblock Unbalanced random matching markets: The stark effect of competition.
\newblock {\em Journal of Political Economy}, 125(1):69 -- 98, 2017.

\bibitem[GS62]{GaleS62}
D.~Gale and Lloyd~S. Shapley.
\newblock College admissions and the stability of marriage.
\newblock {\em American Mathematical Monthly}, 69:9--14, 1962.

\bibitem[IM05]{ImmorlicaHonestyStability05}
Nicole Immorlica and Mohammad Mahdian.
\newblock Marriage, honesty, and stability.
\newblock In {\em Proceedings of the Sixteenth Annual {ACM-SIAM} Symposium on
  Discrete Algorithms, {SODA} 2005, Vancouver, British Columbia, Canada,
  January 23-25, 2005}, pages 53--62, 2005.

\bibitem[Pit89]{PittelAverageStable89}
Boris Pittel.
\newblock The average number of stable matchings.
\newblock {\em SIAM J. Discret. Math.}, 2(4):530--549, November 1989.

\bibitem[Rot86]{RothRuralHospital86}
Alvin~E. Roth.
\newblock On the allocation of residents to rural hospitals: A general property
  of two-sided matching markets.
\newblock {\em Econometrica}, 54(2):425--427, 1986.

\bibitem[Wil72]{WilsonAnalysisStable72}
L.~B. Wilson.
\newblock An analysis of the stable marriage assignment algorithm.
\newblock {\em BIT Numerical Mathematics}, 12(4):569--575, Dec 1972.

\end{thebibliography}
 
%
%
%

\end{document}